\documentclass[12pt,english,draftcls, one column]{IEEEtran}
\usepackage[T1]{fontenc}
\usepackage[latin9]{inputenc}
\usepackage{float}
\usepackage{amsmath}
\usepackage{amsthm}
\usepackage{amssymb}
\usepackage{graphicx}

\makeatletter

\floatstyle{ruled}
\newfloat{algorithm}{tbp}{loa}
\providecommand{\algorithmname}{Algorithm}
\floatname{algorithm}{\protect\algorithmname}

\theoremstyle{plain}

\theoremstyle{plain}

\ifCLASSINFOpdf
\else
\fi
\usepackage{babel}


\makeatother

\usepackage{babel}
\providecommand{\propositionname}{Proposition}
\providecommand{\theoremname}{Theorem}

\begin{document}

\title{Time-Asynchronous Robust Cooperative Transmission for the Downlink of
C-RAN}

\author{Seok-Hwan Park, \textit{Member}, \textit{IEEE}, Osvaldo Simeone,
\textit{Fellow, IEEE}, \\ and Shlomo Shamai (Shitz),\textit{ Fellow,
IEEE}\thanks{S.-H. Park is with the Division of Electronic Engineering, Chonbuk
National University, Jeonju 54896, Korea (email: seokhwan@jbnu.ac.kr).

O. Simeone is with the Center for Wireless Communication and Signal
Processing Research, New Jersey Institute of Technology, 07102 Newark,
New Jersey, USA (email: osvaldo.simeone@njit.edu).

S. Shamai (Shitz) is with the Department of Electrical Engineering,
Technion, Haifa, 32000, Israel (email: sshlomo@ee.technion.ac.il).}
\thanks{The work of S.-H. Park was supported by the National Research Foundation of Korea funded by the Korea Government (MSIP) under Grant NRF-2015R1C1A1A01051825. The work of O. Simeone was partially funded by U.S. NSF through grant CCF-1525629. The work of S. Shamai has been supported by the European Union's Horizon 2020 Research And Innovation Programme, grant agreement no. 694630.
}
\thanks{Copyright (c) 2015 IEEE. Personal use of this material is permitted. However, permission to use this material for any other purposes must be obtained from the IEEE by sending a request to pubs-permissions@ieee.org.
}}
\maketitle
\begin{abstract}
This work studies the robust design of downlink precoding for cloud
radio access network (C-RAN) in the presence of asynchronism among
remote radio heads (RRHs). Specifically, a C-RAN downlink system is
considered in which non-ideal fronthaul links connecting two RRHs
to a Baseband Unit (BBU) may cause a time offset, as well as a phase
offset, between the transmissions of the two RRHs. The offsets are
a priori not known to the BBU. With the aim of counteracting the unknown
time offset, a robust precoding scheme is considered that is based
on the idea of correlating the signal transmitted by one RRH with
a number of delayed versions of the signal transmitted by the other
RRH. For this transmission strategy, the problem of maximizing the
worst-case minimum rate is tackled while satisfying per-RRH transmit
power constraints. Numerical results are reported that verify the
advantages of the proposed robust scheme as compared to conventional
non-robust design criteria as well as non-cooperative transmission.\end{abstract}

\begin{IEEEkeywords}
Asynchronous transmission, robust optimization, cloud radio access
network, fronthaul latency.
\end{IEEEkeywords}

\theoremstyle{theorem}
\newtheorem{theorem}{Theorem}
\theoremstyle{proposition}
\newtheorem{proposition}{Proposition}
\theoremstyle{lemma}
\newtheorem{lemma}{Lemma}
\theoremstyle{corollary}
\newtheorem{corollary}{Corollary}
\theoremstyle{definition}
\newtheorem{definition}{Definition}
\theoremstyle{remark}
\newtheorem{remark}{Remark}

\section{Introduction}

Cloud radio access network (C-RAN) is a promising architecture to
address the challenging requirements for the fifth generation (5G)
of wireless communication systems in terms of reduced deployment costs,
high data rate and low power consumption \cite{Simeone-et-al:JCN}.
In a C-RAN, a baseband processing unit (BBU) implements the baseband
processing functionalities of a set of remote radio heads (RRHs) that
are connected to the BBU by means of fronthaul links.

In order to realize these potential benefits of centralized processing
at the BBU, it is necessary to deploy reliable and high-speed fronthaul
links. Nevertheless, cost and technological limitations dictate the
use of fronthauling technologies, such as wireless microwave or mmwave,
that fall short of the ideal requirements of high capacity and perfect
reliability. This has led researchers in both industry and academia
to investigate the impact of fronthaul \textit{capacity} constraints
on the spectral efficiency, see, e.g., \cite{Park-et-al:TSP13}-\cite{Dai-Yu},
as well as the effect of fronthaul \textit{latency} on higher-layer
performance metrics \cite{Khalili-Simeone} (see also review in \cite{Simeone-et-al:JCN}).
This letter contributes to this line of work by studying the implications
of, and countermeasures to, the imperfect mutual \textit{synchronization}
of the RRHs that may result from non-ideal fronthaul connections to
the cloud.

\begin{figure}
\centering\includegraphics[width=12.96cm,height=5.44cm]{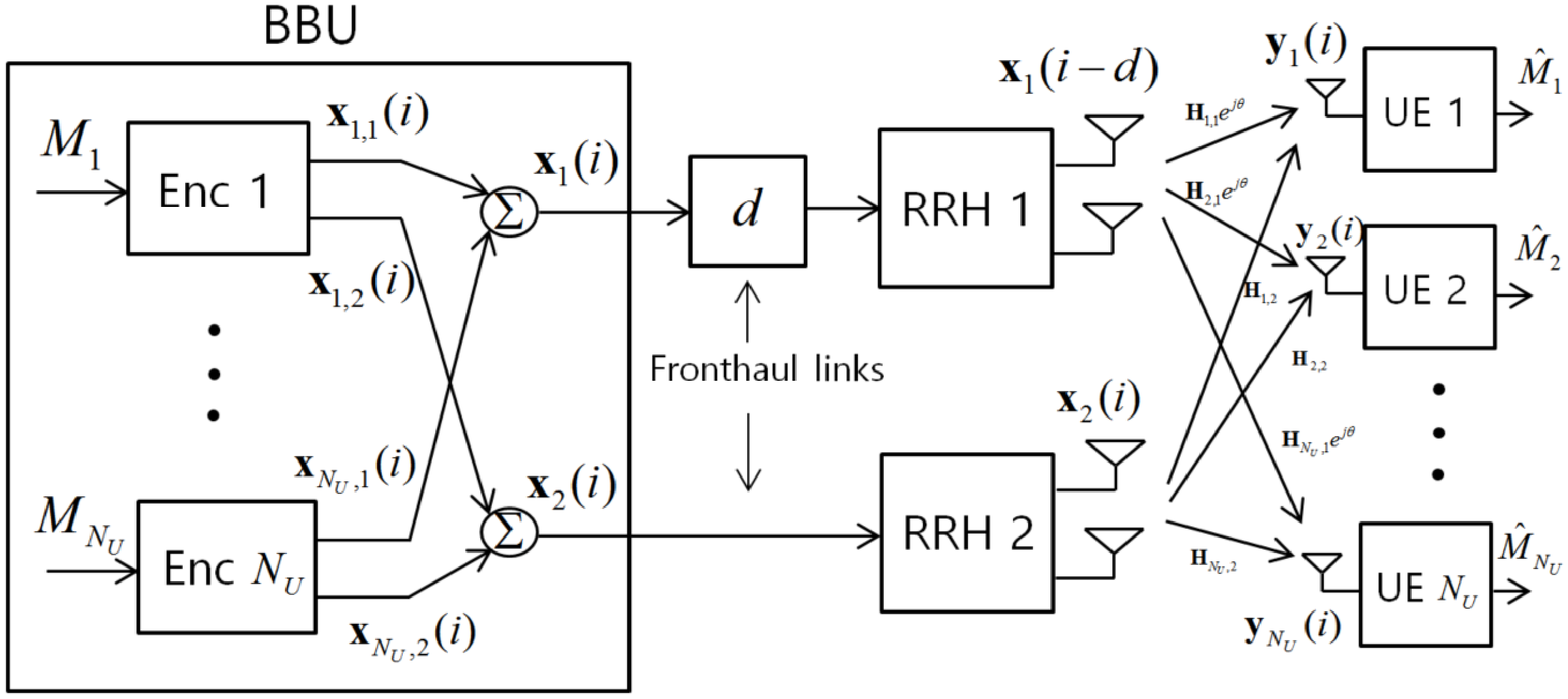}

\caption{\label{fig:system-model}Illustration of a C-RAN downlink with time
offset.}
\end{figure}

To this end, we consider the C-RAN downlink system in Fig. 1 in which
the transmission of the two RRHs is characterized by a relative time
offset, as well as by a phase offset. These offsets are unknown to
the BBU and are generally caused by imperfections in the transmission
and processing of clock-bearing signals on the fronthaul links \cite{Olive-et-al}.
We specifically concentrate on the design of robust cooperative precoding
strategies at the BBU across the two RRHs that account for the existing
uncertainty regarding the inter-RRH time offset. The effect of the
phase offset is also included in the model, but it is not the focus
of the analysis of this letter.

The presence of an unknown time offset generally precludes the use
of cooperative precoding across the two RRHs, since the RRHs only
have radio frequency functionalities and cannot compensate for fronthaul
inaccuracies. Owing to the time offset, the BBU cannot control the
correlation of the signals transmitted by the RRHs, which, in turn,
makes it impossible to ensure the constructive superposition of the
RRHs' signals at the desired receivers. In fact, when the synchronization
is imperfect, it may be beneficial to restrict transmission to non-cooperative
strategies, as studied in \cite{Xu-et-al}.

In this work, we introduce and design a robust asynchronous cooperation
scheme for C-RAN, which is motivated by the coding scheme proposed
in \cite{Yemini-et-al} in the context of an asynchronous cognitive
multiple access channel. In the considered set-up, as detailed in
Sec. II, the time offset is known by the BBU to lie in a bounded range.
The main idea of the robust scheme, as discussed in Sec. III, is to
correlate the signal to be transmitted by one RRH with different delayed
versions of the signal to be transmitted by the other RRH. In this
fashion, no matter what the actual time offset is, partial correlation
between the transmitted signals can be preserved and, with it, cooperative
gains can be potentially accrued. We note that, at a fundamental level,
the idea can be thought of as a robust scheme for communication on
a compound channel \cite{Benammar-et-al}. For this scheme, we tackle
the problem of maximizing the worst-case minimum rate over the correlation
matrices of the transmitted signals while satisfying the per-RRH power
constraints in Sec. IV. Sec. V provides some numerical results that
validate the advantages of the proposed robust scheme as compared
to conventional cooperative and non-cooperative strategies.

\section{System Model\label{sec:System-Model}}

We consider the downlink of a C-RAN shown in Fig. \ref{fig:system-model},
in which a BBU manages two RRHs that communicate with $N_{U}$ user
equipments (UEs). Specifically, by using its fronthaul connections
to the RRHs, the BBU wishes to send a message $M_{k}\in\{1,\ldots,2^{nR_{k}}\}$
to the $k$th UE, where $R_{k}$ and $n$ denote the rate of the message
$M_{k}$ and the coding block length, which is assumed to be sufficiently
large to invoke information-theoretic arguments. We denote the numbers
of the antennas of the $j$th RRH and the $k$th UE as $n_{R,j}$
and $n_{U,k}$, respectively, and define the sets $\mathcal{N}_{U}\triangleq\{1,\ldots,N_{U}\}$
and $\mathcal{N}_{R}\triangleq\{1,2\}$ of the UEs and RRHs, respectively.

Unlike previous works that investigated the impact of fronthaul capacity
or latency limitations (see, e.g., \cite{Simeone-et-al:JCN}), we
study the impact of \textit{asynchronism} among the two RRHs, which
is caused by non-ideal clock transfer or the fronthaul links. We model
the lack of time synchronization with the baseline scenario in Fig.
\ref{fig:system-model}, in which RRH 1 has a time offset of $d$
channel uses with respect to RRH 2, where the amount $d$ is not known
to the BBU. The BBU only knows that the delay $d$ belongs to an uncertainty
set $d\in\mathcal{D}=\{0,1,\ldots,D\}$, where $D\geq1$ represents
the worst-case delay. As in standard C-RAN systems, the RRHs have
only radio frequency functionalities and hence are not able to compensate
for any asynchronism. No further fronthaul limitations, such as in
terms of capacity, are accounted for in the model.

Under a flat fading channel model, the signal $\mathbf{y}_{k}(i)\in\mathbb{C}^{n_{U,k}\times1}$
received by the $k$th UE on the $i$th channel use, $i\in\{1,\ldots,n\}$,
is given as
\begin{equation}
\mathbf{y}_{k}(i)=\mathbf{H}_{k,1}e^{j\theta}\mathbf{x}_{1}(i-d)+\mathbf{H}_{k,2}\mathbf{x}_{2}(i)+\mathbf{z}_{k}(i),\label{eq:received-signal}
\end{equation}
where $\mathbf{x}_{j}(i)\in\mathbb{C}^{n_{R,j}\times1}$ is the signal
transmitted by the $j$th RRH during the $i$th channel use; $\mathbf{H}_{k,j}\in\mathbb{C}^{n_{U,k}\times n_{R,j}}$
is the channel matrix from the $j$th RRH to the $k$th UE; $\theta$
denotes the phase offset between the RRHs, which is also assumed to
be unknown to the BBU; and $\mathbf{z}_{k}(i)\in\mathbb{C}^{n_{U,k}\times1}$
is the additive noise vector distributed as $\mathbf{z}_{k}(i)\sim\mathcal{CN}(\mathbf{0},\mathbf{I})$.
We assume that the channel matrices $\{\mathbf{H}_{k,j}\}_{k\in\mathcal{N}_{U},j\in\mathcal{N}_{R}}$,
the time offset $d$ and the phase offset $\theta$ remain constant
during each coding block of $n$ channel uses, and impose per-RRH
transmit power constraints as $(1/n)\sum_{i=1}^{n}\left\Vert \mathbf{x}_{j}(i)\right\Vert {}^{2}\leq P_{j}$.

As further detailed next, in this work, we focus on the design of
cooperative linear precoding with the aim of ensuring robustness with
respect to the time asynchronicity $d\in\mathcal{D}$. The impact
of the phase offset will be further studied in Sec. \ref{sec:Numerical-Results}
via numerical results.

\section{Asynchronous Robust Transmission\label{sec:Asynchronous-robust-transmission}}

In this section, we describe the proposed asynchronous cooperative
scheme for linear precoding across the two RRHs. The key idea of the
proposed robust scheme is to make the signal $\mathbf{x}_{2}(i)$
transmitted by the RRH 2 correlated to both the current and the delayed
versions $\{\mathbf{x}_{1}(i-d)\}_{d\in\mathcal{D}}$ of the transmit
signal of RRH 1. This follows the idea proposed in \cite{Yemini-et-al},
in which a general information-theoretic formulation was provided
in the context of an asynchronous cognitive channel. The result was
then applied in \cite{Yemini-et-al} to a system with \textit{single-antenna}
transceivers and a \textit{single} user. As mentioned, the robust
scheme is designed to counteract the unknown time offset $d$, but
it assumes no phase offset, i.e., $\theta=0$. Note that, from now
on, we do not explicitly indicate the dependence of the signals on
the channel use index to simplify the notation.

To elaborate, for any channel use, we define a vector $\bar{\mathbf{v}}_{k}=[\mathbf{v}_{k,0}^{\dagger}\,\mathbf{v}_{k,1}^{\dagger}\,\ldots\,\mathbf{v}_{k,D}^{\dagger}]^{\dagger}\in\mathbb{C}^{(D+1)n_{R,1}\times1}$
for each UE $k\in\mathcal{N}_{U}$. This represents $D+1$ consecutive
symbols from the precoded signal transmitted by RRH 1 and intended
for UE $k$, with $\mathbf{v}_{k,i}$ being the signal sent when $d=i$.
Superimposing the signals intended for different UEs, the signal $\mathbf{x}_{1}$
transmitted by RRH 1 when the delay is $d$ is then given as
\begin{equation}
\mathbf{x}_{1}=\sum_{k\in\mathcal{N}_{U}}\mathbf{v}_{k,d}.\label{eq:transmit-signal-RRH-1}
\end{equation}
The signal $\mathbf{x}_{2}$ transmitted by RRH 2 is also given as
the superposition of the signals $\mathbf{x}_{2,k}\in\mathbb{C}^{n_{R,2}\times1}$
intended for each UE $k$ as
\begin{equation}
\mathbf{x}_{2}=\sum_{k\in\mathcal{N}_{U}}\mathbf{x}_{2,k}.\label{eq:transmit-signal-RRH-2}
\end{equation}
Moreover, when the time offset is $d$, the received signal vector
$\mathbf{y}_{k}$ of UE $k$, denoted as $\mathbf{y}_{k,d}$, is given
as
\begin{equation}
\mathbf{y}_{k,d}=\mathbf{H}_{k,1}\sum_{l\in\mathcal{N}_{U}}\mathbf{v}_{l,d}+\mathbf{H}_{k,2}\sum_{l\in\mathcal{N}_{U}}\mathbf{x}_{2,l}+\mathbf{z}_{k}.\label{eq:received-signal-rewritten}
\end{equation}

The key property of the robust precoding scheme is that the signal
$\mathbf{x}_{k,2}$ transmitted by RRH 2 is correlated with the $D+1$
consecutive vectors $\mathbf{v}_{k,d}$, $d\in\mathcal{D}$, each
of which is sent by RRH 1 when the delay is $d$. Specifically, the
vectors $\bar{\mathbf{v}}_{k}$ and $\mathbf{x}_{k,2}$ are characterized
by the covariance matrix
\begin{equation}
\mathbf{\Sigma}_{\mathbf{x}}\!\left(\mathbf{V},\mathbf{\mathbf{\Sigma}}_{\mathbf{x}_{2}},\mathbf{\Omega}\right)\!=\mathbb{E}\!\left[\!\left[\!\!\begin{array}{c}
\bar{\mathbf{v}}_{k}\\
\mathbf{x}_{2,k}
\end{array}\!\!\right]\!\!\left[\!\bar{\mathbf{v}}_{k}^{\dagger}\,\mathbf{x}_{2,k}^{\dagger}\right]\right]\!\!=\!\!\left[\!\!\begin{array}{cc}
\bar{\mathbf{V}}_{k} & \bar{\mathbf{\Omega}}_{k}\\
\bar{\mathbf{\Omega}}_{k}^{\dagger} & \mathbf{\Sigma}_{\mathbf{x}_{2,k}}
\end{array}\!\!\!\right]\!\!,\label{eq:joint-covariance-Gaussian}
\end{equation}
where we defined the correlation matrices $\bar{\mathbf{V}}_{k}=\mathbb{E}[\bar{\mathbf{v}}_{k}\bar{\mathbf{v}}_{k}^{\dagger}]$,
$\mathbf{\Sigma}_{\mathbf{x}_{2,k}}=\mathbb{E}[\mathbf{x}_{2,k}\mathbf{x}_{2,k}^{\dagger}]$
and $\mathbf{\Omega}_{k,d}=\mathbb{E}[\mathbf{v}_{k,d}\mathbf{x}_{2,k}^{\dagger}]$,
and $\bar{\mathbf{\Omega}}_{k}=[\mathbf{\Omega}_{k,0}^{\dagger}\,\mathbf{\Omega}_{k,1}^{\dagger}\,\ldots\,\mathbf{\Omega}_{k,D}^{\dagger}]^{\dagger}$.

Assuming that each UE $k$ decodes its message $M_{k}$ based on the
received signal $\mathbf{y}_{k,d}$ by treating the other signals
as noise, the following proposition derives a vector of rates $\mathbf{R}\triangleq\{R_{k}\}_{k\in\mathcal{N}_{U}}$
that can be supported irrespective of the time offset $d$. The proposition
uses the standard definition for mutual information \cite{ElGamal-Kim}
and is based on assuming the vectors $\bar{\mathbf{v}}_{k}$ and $\mathbf{x}_{2,k}$
to be circularly symmetric complex Gaussian.

\begin{proposition}\label{prop:rate}The following rates are achievable
irrespective of the time offset $d$:
\begin{align}
R_{k} & =\min_{d\in\mathcal{D}}\,\,f_{k,d}\left(\mathbf{V},\mathbf{\mathbf{\Sigma}}_{\mathbf{x}_{2}},\mathbf{\Omega}\right),\label{eq:achievable-rate}
\end{align}
where the function $f_{k,d}(\mathbf{V},\mathbf{\mathbf{\Sigma}}_{\mathbf{x}_{2}},\mathbf{\Omega})$
is defined as
\begin{align}
f_{k,d}\left(\mathbf{V},\mathbf{\mathbf{\Sigma}}_{\mathbf{x}_{2}},\mathbf{\Omega}\right) & =I\left(\mathbf{v}_{k,d};\mathbf{y}_{k,d}\right)+I\left(\mathbf{x}_{2,k};\mathbf{y}_{k,d}|\bar{\mathbf{v}}_{k}\right).\label{eq:rate-function}
\end{align}
The first term $I(\mathbf{v}_{k,d};\mathbf{y}_{k,d})$ in (\ref{eq:rate-function})
can be expressed as
\begin{align}
I\left(\mathbf{v}_{k,d};\mathbf{y}_{k,d}\right) & =\log_{2}\left|\mathbf{V}_{k}\right|+\log_{2}\left|\mathbf{\Sigma}_{\mathbf{y}_{k,d}}\right|-\log_{2}\left|\mathbf{A}_{k,d}\right|,\label{eq:computation-MI-1}
\end{align}
where we defined $\mathbf{V}_{k}=\mathbb{E}[\mathbf{v}_{k,d}\mathbf{v}_{k,d}^{\dagger}]$
and the matrices $\mathbf{\Sigma}_{\mathbf{y}_{k,d}}$ and $\mathbf{A}_{k,d}$
are given as
\begin{align}
\mathbf{\Sigma}_{\mathbf{y}_{k,d}} & =\sum_{l\in\mathcal{N}_{U}}\mathbf{H}_{k,1}\mathbf{V}_{l}\mathbf{H}_{k,1}^{\dagger}+\sum_{l\in\mathcal{N}_{U}}\mathbf{H}_{k,2}\mathbf{\mathbf{\Sigma}}_{\mathbf{x}_{2,l}}\mathbf{H}_{k,2}^{\dagger}\label{eq:covariance-Rx-signal-delay-d}\\
 & +\sum_{l\in\mathcal{N}_{U}}\mathbf{H}_{k,1}\mathbf{\Omega}_{l,d}\mathbf{H}_{k,2}^{\dagger}+\sum_{l\in\mathcal{N}_{U}}\mathbf{H}_{k,2}\mathbf{\Omega}_{l,d}^{\dagger}\mathbf{H}_{k,1}^{\dagger},\nonumber \\
\mathbf{A}_{k,d} & \!=\!\left[\!\!\begin{array}{cc}
\mathbf{V}_{k} & \!\!\!\!\!\!\!\!\!\!\mathbf{V}_{k}\mathbf{H}_{k,1}^{\dagger}+\mathbf{\Omega}_{k,d}\mathbf{H}_{k,2}^{\dagger}\\
\mathbf{H}_{k,1}\mathbf{V}_{k}+\mathbf{H}_{k,2}\mathbf{\Omega}_{k,d}^{\dagger} & \!\!\!\mathbf{\Sigma}_{\mathbf{y}_{k,d}}
\end{array}\!\!\right]\!,\label{eq:covariance-Akd}
\end{align}
and the second term $I(\mathbf{x}_{2,k};\mathbf{y}_{k,d}|\bar{\mathbf{v}}_{k})$
can be written as
\begin{align}
I\left(\mathbf{x}_{2,k};\mathbf{y}_{k,d}|\bar{\mathbf{v}}_{k}\right) & =f_{k,d,2}\left(\mathbf{V},\mathbf{\mathbf{\Sigma}}_{\mathbf{x}_{2}},\mathbf{\Omega}\right)\label{eq:computation-MI-2}\\
 & \!\!\!\!\!\!\!\!\!\!=\log_{2}\left|\mathbf{T}_{k,1}\mathbf{B}_{k,d}\mathbf{T}_{k,1}^{\dagger}\right|-\log\left|\mathbf{B}_{k,d}\right|\nonumber \\
 & \!\!\!\!\!\!\!\!\!\!-(D+1)\log\left|\mathbf{V}_{k}\right|+\log\left|\mathbf{T}_{k,2}\mathbf{B}_{k,d}\mathbf{T}_{k,2}^{\dagger}\right|,\nonumber
\end{align}
where we defined the matrices
\begin{align}
 & \mathbf{B}_{k,d}=\left[\begin{array}{ccc}
\mathbf{\mathbf{\Sigma}}_{\mathbf{x}_{2,k}} & \mathbf{C}_{k,d}^{\dagger} & \bar{\mathbf{\Omega}}_{k}^{\dagger}\\
\mathbf{C}_{k,d} & \mathbf{\Sigma}_{\mathbf{y}_{k,d}} & \mathbf{D}_{k,d}^{\dagger}\\
\bar{\mathbf{\Omega}}_{k} & \mathbf{D}_{k,d} & \bar{\mathbf{V}}_{k}
\end{array}\right],\label{eq:Bkd}\\
 & \mathbf{C}_{k,d}=\mathbf{H}_{k,1}\mathbf{\Omega}_{k,d}+\mathbf{H}_{k,2}\mathbf{\mathbf{\Sigma}}_{\mathbf{x}_{2,k}},\\
 & \mathbf{D}_{k,d}\!=\!\!\left[(\mathbf{H}_{k,2}\mathbf{\Omega}_{k,0}^{\dagger})^{\dagger}\,\cdots\,(\mathbf{H}_{k,2}\mathbf{\Omega}_{k,D}^{\dagger})^{\dagger}\right]^{\dagger}\!\!+\!\!\mathbf{D}_{d}^{\dagger}\mathbf{V}_{k}\mathbf{H}_{k,1}^{\dagger},\\
 & \mathbf{T}_{k,1}=\left[\mathbf{0}_{^{(n_{U,k}+(D+1)n_{R,1})\times n_{R,2}}}\,\,\mathbf{I}_{^{n_{U,k}+(D+1)n_{R,1}}}\right],\label{eq:Tk1}\\
 & \mathbf{T}_{k,2}\!\!=\!\!\!\left[\!\!\!\begin{array}{ccc}
\mathbf{I}_{n_{R,2}} & \!\!\!\!\!\mathbf{0}_{^{n_{R,2}\times n_{U,k}}} & \!\!\!\!\!\!\!\!\mathbf{0}_{^{n_{R,2}\times(D+1)n_{R,1}}}\\
\mathbf{0}_{^{(D+1)n_{R,1}\times n_{R,2}}} & \!\!\!\!\!\mathbf{0}_{^{(D+1)n_{R,1}\times n_{U,k}}} & \!\!\!\!\!\mathbf{I}_{^{(D+1)n_{R,1}}}
\end{array}\!\!\!\!\right]\!\!,\label{eq:Tk2}
\end{align}
with $\mathbf{D}_{d}=[\mathbf{0}_{n_{R,1}\times dn_{T,1}}\,\mathbf{I}_{n_{R,1}}\,\mathbf{0}_{n_{R,1}\times(D-d)n_{R,1}}]$.
We also defined the notations $\mathbf{V}\triangleq\{\mathbf{V}_{k}\}_{k\in\mathcal{N}_{U}}$,
$\mathbf{\mathbf{\Sigma}}_{\mathbf{x}_{2}}\triangleq\{\mathbf{\mathbf{\Sigma}}_{\mathbf{x}_{2,k}}\}_{k\in\mathcal{N}_{U}}$
and $\mathbf{\Omega}=\{\mathbf{\Omega}_{k,d}\}_{k\in\mathcal{N}_{U},d\in\mathcal{D}}$.
\end{proposition}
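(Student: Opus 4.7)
The plan is to prove this proposition in two stages: first, establish achievability of the rate expression (\ref{eq:achievable-rate}) via a compound-channel information-theoretic argument; second, evaluate the two mutual-information terms under the jointly Gaussian inputs prescribed by (\ref{eq:joint-covariance-Gaussian}).

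For the achievability, I would extend the construction of \cite{Yemini-et-al} for the asynchronous cognitive multiple-access channel to the present multi-user MIMO C-RAN setting. Because codebooks for different UEs are drawn independently and each receiver treats signals intended for other UEs as additive Gaussian interference, the problem decouples into $N_U$ effectively point-to-point cognitive links. For UE $k$, the BBU---which has access to both RRHs' baseband streams---generates $\bar{\mathbf{v}}_k^n$ and $\mathbf{x}_{2,k}^n$ jointly from i.i.d.\ circularly symmetric complex Gaussian vectors with covariance (\ref{eq:joint-covariance-Gaussian}), via a superposition construction in which $\bar{\mathbf{v}}_k$ plays the role of a cloud-center codeword and $\mathbf{x}_{2,k}$ plays the role of a Gelfand--Pinsker satellite that uses the entire $\bar{\mathbf{v}}_k$ (not just the unknown $\mathbf{v}_{k,d}$) as non-causal side information at the encoder. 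Joint-typicality decoding at UE $k$ then yields rates $R_{k,1}<I(\mathbf{v}_{k,d};\mathbf{y}_{k,d})$ for the cloud layer (decoded while treating the satellite as noise) and $R_{k,2}<I(\mathbf{x}_{2,k};\mathbf{y}_{k,d}|\bar{\mathbf{v}}_k)$ for the satellite layer (the Gelfand--Pinsker achievable rate, which for jointly Gaussian inputs attains the conditional-MI expression via a Costa-like auxiliary). Since the code is fixed while the BBU remains uninformed of $d$, a standard compound-channel argument then forces the minimum over $d\in\mathcal{D}$ in (\ref{eq:achievable-rate}).

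For the Gaussian evaluation, I would apply the standard log-determinant formulas for the differential entropy of complex Gaussian vectors. The first term follows from
\[
I(X;Y)=\log_{2}|\mathbf{\Sigma}_{X}|+\log_{2}|\mathbf{\Sigma}_{Y}|-\log_{2}|\mathbf{\Sigma}_{(X,Y)}|,
\]
with $X=\mathbf{v}_{k,d}$ and $Y=\mathbf{y}_{k,d}$: the receive covariance $\mathbf{\Sigma}_{\mathbf{y}_{k,d}}$ is obtained from (\ref{eq:received-signal-rewritten}) by using the per-UE cross-covariances $\mathbf{\Omega}_{l,d}$ between $\mathbf{v}_{l,d}$ and $\mathbf{x}_{2,l}$, yielding the four-term sum (\ref{eq:covariance-Rx-signal-delay-d}), while the off-diagonal block of $\mathbf{A}_{k,d}$ in (\ref{eq:covariance-Akd}) equals $\mathbb{E}[\mathbf{v}_{k,d}\mathbf{y}_{k,d}^{\dagger}]=\mathbf{V}_{k}\mathbf{H}_{k,1}^{\dagger}+\mathbf{\Omega}_{k,d}\mathbf{H}_{k,2}^{\dagger}$ by linearity and independence across UEs. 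The second term follows from the Gaussian conditional-MI identity
\[
I(X;Y|Z)=\log|\mathbf{\Sigma}_{(X,Z)}|+\log|\mathbf{\Sigma}_{(Y,Z)}|-\log|\mathbf{\Sigma}_{Z}|-\log|\mathbf{\Sigma}_{(X,Y,Z)}|,
\]
with $X=\mathbf{x}_{2,k}$, $Y=\mathbf{y}_{k,d}$, $Z=\bar{\mathbf{v}}_k$: the full joint covariance $\mathbf{\Sigma}_{(X,Y,Z)}$ is precisely $\mathbf{B}_{k,d}$ in (\ref{eq:Bkd}), assembled from $\mathbf{C}_{k,d}$ (the UE-$k$ contribution to $\mathbb{E}[\mathbf{x}_{2,k}\mathbf{y}_{k,d}^{\dagger}]$) and $\mathbf{D}_{k,d}$ (the cross-covariance between $\bar{\mathbf{v}}_k$ and $\mathbf{y}_{k,d}$, where the selector $\mathbf{D}_d$ picks out the block $\mathbf{v}_{k,d}$ that actually appears in the channel). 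The selector matrices $\mathbf{T}_{k,1}$ and $\mathbf{T}_{k,2}$ extract the sub-covariances $\mathbf{\Sigma}_{(Y,Z)}$ and $\mathbf{\Sigma}_{(X,Z)}$ from $\mathbf{B}_{k,d}$, and the term $-(D+1)\log|\mathbf{V}_{k}|$ accounts for $\log|\mathbf{\Sigma}_{Z}|=\log|\bar{\mathbf{V}}_{k}|$ under the natural design choice of taking the $D+1$ delayed copies $\{\mathbf{v}_{k,d}\}_{d\in\mathcal{D}}$ to be mutually independent with common covariance $\mathbf{V}_{k}$, which makes $\bar{\mathbf{V}}_{k}$ block-diagonal.

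The main technical obstacle is pinning down the correct achievability scheme, and in particular justifying why the Gelfand--Pinsker layer for $\mathbf{x}_{2,k}$ must bin against the entire $\bar{\mathbf{v}}_k$ rather than against the single block $\mathbf{v}_{k,d}$. This is the essential new ingredient relative to the synchronous case: because the BBU cannot predict which component of $\bar{\mathbf{v}}_k$ will align with $\mathbf{x}_{2,k}$ at the UE, the binning must be universal over $d$, producing the (generically smaller) rate $I(\mathbf{x}_{2,k};\mathbf{y}_{k,d}|\bar{\mathbf{v}}_k)$ in place of the oracle rate $I(\mathbf{x}_{2,k};\mathbf{y}_{k,d}|\mathbf{v}_{k,d})$. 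Once this asynchronous Gelfand--Pinsker/compound-channel step is in place via \cite{Yemini-et-al}, the remainder reduces to a systematic, though notation-heavy, bookkeeping of covariance blocks under the linear-Gaussian channel (\ref{eq:received-signal-rewritten}).
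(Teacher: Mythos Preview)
Your proposal is correct and follows essentially the same route as the paper's proof: the paper also invokes \cite[Th.~2]{Yemini-et-al} for the achievability of (\ref{eq:achievable-rate})--(\ref{eq:rate-function}) and then reduces the two mutual-information terms to differential-entropy combinations (exactly the identities you wrote, with the $h(\mathbf{x}_{2,k})$ terms in the paper's expansion of $I(\mathbf{x}_{2,k};\mathbf{y}_{k,d}\mid\bar{\mathbf{v}}_k)$ cancelling to your four-term formula), followed by direct log-determinant evaluation. Your write-up is in fact more detailed than the paper's, which simply cites \cite{Yemini-et-al} and says ``direct calculation''; in particular, you make explicit the block-diagonal assumption on $\bar{\mathbf{V}}_k$ needed for $\log|\bar{\mathbf{V}}_k|=(D+1)\log|\mathbf{V}_k|$, which the paper uses without comment.
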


\begin{proof}It was shown in \cite[Th. 2]{Yemini-et-al} that the
rates $R_{k}$ in (\ref{eq:achievable-rate}) can be achieved with
the function $f_{k,d}(\mathbf{V},\mathbf{\mathbf{\Sigma}}_{\mathbf{x}_{2}},\mathbf{\Omega})$
in (\ref{eq:rate-function}) for a given receiver. The proof is completed
by showing that the mutual information quantities $I(\mathbf{v}_{k,d};\mathbf{y}_{k,d})$
and $I(\mathbf{x}_{k,2};\mathbf{y}_{k,d}|\bar{\mathbf{v}}_{k})$ are
calculated as in (\ref{eq:computation-MI-1}) and (\ref{eq:computation-MI-2}),
respectively.

To this end, we can express $I(\mathbf{v}_{k,d};\mathbf{y}_{k,d})$
and $I(\mathbf{x}_{k,2};\mathbf{y}_{k,d}|\bar{\mathbf{v}}_{k})$ as
\begin{align}
I\left(\mathbf{v}_{k,d};\mathbf{y}_{k,d}\right) & =\!h\!\left(\mathbf{v}_{k,d}\right)+\!h\left(\mathbf{y}_{k,d}\right)-\!h\left(\mathbf{v}_{k,d},\mathbf{y}_{k,d}\right),\label{eq:derivation-MI-1}\\
I(\mathbf{x}_{k,2};\mathbf{y}_{k,d}|\bar{\mathbf{v}}_{k}) & =h\left(\mathbf{x}_{2,k}\right)+h\left(\mathbf{y}_{k,d},\bar{\mathbf{v}}_{k}\right)\!-\!h\left(\mathbf{y}_{k,d},\bar{\mathbf{v}}_{k},\mathbf{x}_{2,k}\right)\nonumber \\
 & -h\left(\mathbf{x}_{2,k}\right)-h\left(\bar{\mathbf{v}}_{k}\right)+h\left(\mathbf{x}_{2,k},\bar{\mathbf{v}}_{k}\right).\label{eq:derivation-MI-2}
\end{align}
Direct calculation of the differential entropy values in (\ref{eq:derivation-MI-1})
and (\ref{eq:derivation-MI-2}) leads to the expressions in (\ref{eq:computation-MI-1})
and (\ref{eq:computation-MI-2}), respectively.\end{proof}

\section{Problem Definition and Optimization\label{sec:Problem-definition-optimization}}

We aim at optimizing the correlation matrices $\mathbf{V}$, $\mathbf{\mathbf{\Sigma}}_{\mathbf{x}_{2}}$
and $\mathbf{\Omega}$ with the goal of maximizing the worst-case
minimum rate $R_{\min}=\min_{k\in\mathcal{N}_{U}}R_{k}$ while satisfying
the per-RRH power constraints. We can state the problem as\begin{subequations}\label{eq:problem-original}
\begin{align}
\underset{\mathbf{V},\mathbf{\mathbf{\Sigma}}_{\mathbf{x}_{2}},\mathbf{\Omega},R_{\min}}{\mathrm{maximize}} & R_{\min}\label{eq:problem-original-objective}\\
\mathrm{s.t.}\,\,\,\,\,\,\, & R_{\min}\leq f_{k,d}\left(\mathbf{V},\mathbf{\mathbf{\Sigma}}_{\mathbf{x}_{2}},\mathbf{\Omega}\right),\,\,\mathrm{for}\,\,k\in\mathcal{N}_{U},\,d\in\mathcal{D},\label{eq:problem-original-rate-constraint}\\
 & \!\!\sum_{k\in\mathcal{N}_{U}}\!\!\mathrm{tr}\!\left(\mathbf{V}_{k}\right)\!\leq\!P_{1},\!\!\,\,\sum_{k\in\mathcal{N}_{U}}\!\!\mathrm{tr}\!\left(\mathbf{\Sigma}_{\mathbf{x}_{2,k}}\right)\!\leq\!P_{2},\label{eq:problem-original-power-constraint}\\
 & \mathbf{\Sigma}_{\mathbf{x}}\left(\mathbf{V},\mathbf{\mathbf{\Sigma}}_{\mathbf{x}_{2}},\mathbf{\Omega}\right)\succeq\mathbf{0},\,\,\mathrm{for}\,\,k\in\mathcal{N}_{U}.\label{eq:problem-original-psd-constraint}
\end{align}
\end{subequations}

\subsection{Optimization\label{sub:Optimization}}

The problem (\ref{eq:problem-original}) is non-convex due to the
constraints (\ref{eq:problem-original-rate-constraint}). However,
it can be seen that the problem is an instance of the difference-of-convex
(DC) problems, and hence we can adopt the concave convex procedure
(CCCP)-based approach as in \cite{Park-et-al:TSP13} to obtain a sequence
of non-decreasing objective values with respect to the number of iterations.
The detailed algorithm is described in Algorithm 1, where we defined
the functions
\begin{align}
 & \tilde{f}_{k,d}\left(\mathbf{V},\mathbf{\mathbf{\Sigma}}_{\mathbf{x}_{2}},\mathbf{\Omega},\mathbf{V}^{(t)},\mathbf{\mathbf{\Sigma}}_{\mathbf{x}_{2}}^{(t)},\mathbf{\Omega}^{(t)}\right)\\
 & =\log_{2}\left|\mathbf{\Sigma}_{\mathbf{y}_{k,d}}\right|+\log_{2}\left|\mathbf{T}_{k,1}\mathbf{B}_{k,d}\mathbf{T}_{k,1}^{\dagger}\right|+\log_{2}\left|\mathbf{T}_{k,2}\mathbf{B}_{k,d}\mathbf{T}_{k,2}^{\dagger}\right|\nonumber \\
 & -D\cdot\Phi\left(\mathbf{V}_{k},\mathbf{V}_{k}^{(t)}\right)-\Phi\left(\mathbf{A}_{k,d},\mathbf{A}_{k,d}^{(t)}\right)-\Phi\left(\mathbf{B}_{k,d},\mathbf{B}_{k,d}^{(t)}\right),\nonumber
\end{align}
with the definition $\Phi(\mathbf{A},\mathbf{B})\!\!=\!\!\log_{2}\left|\mathbf{B}\right|+\mathrm{tr}(\mathbf{B}^{-1}(\mathbf{A}-\mathbf{B}))/\ln2$.

\begin{algorithm}
\caption{CCCP algorithm for problem (\ref{eq:problem-original})}

\textbf{1.} Initialize the matrices $\mathbf{V}^{(1)},\mathbf{\mathbf{\Sigma}}_{\mathbf{x}_{2}}^{(1)},\mathbf{\Omega}^{(1)}$
to arbitrary feasible matrices that satisfy the constraints (\ref{eq:problem-original-power-constraint})-(\ref{eq:problem-original-psd-constraint})
and set $t=1$.

\textbf{2.} Update the matrices $\mathbf{V}^{(t+1)},\mathbf{\mathbf{\Sigma}}_{\mathbf{x}_{2}}^{(t+1)},\mathbf{\Omega}^{(t+1)}$
as a solution of the following convex problem:\begin{subequations}\label{eq:problem-DC}
\begin{align}
\underset{\mathbf{V},\mathbf{\mathbf{\Sigma}}_{\mathbf{x}_{2}},\mathbf{\Omega},R_{\min}}{\mathrm{maximize}} & R_{\min}\label{eq:problem-original-objective-1}\\
\mathrm{s.t.}\,\,\,\,\,\,\, & R_{\min}\leq\tilde{f}_{k,d}\left(\mathbf{V},\mathbf{\mathbf{\Sigma}}_{\mathbf{x}_{2}},\mathbf{\Omega},\mathbf{V}^{(t)},\mathbf{\mathbf{\Sigma}}_{\mathbf{x}_{2}}^{(t)},\mathbf{\Omega}^{(t)}\right),\nonumber \\
 & \,\,\,\,\,\,\mathrm{for}\,\,k\in\mathcal{N}_{U},\,d\in\mathcal{D},\\
 & \!\!\sum_{k\in\mathcal{N}_{U}}\!\!\!\mathrm{tr}\left(\mathbf{V}_{k}\right)\!\leq\!P_{1},\!\!\!\,\,\sum_{k\in\mathcal{N}_{U}}\!\!\!\mathrm{tr}\left(\mathbf{\Sigma}_{\mathbf{x}_{2,k}}\right)\!\leq\!P_{2},\label{eq:problem-original-power-constraint-1}\\
 & \mathbf{\Sigma}_{\mathbf{x}}\left(\mathbf{V},\mathbf{\mathbf{\Sigma}}_{\mathbf{x}_{2}},\mathbf{\Omega}\right)\succeq\mathbf{0},\,\,\mathrm{for}\,\,k\in\mathcal{N}_{U}.\label{eq:problem-original-psd-constraint-1}
\end{align}
\end{subequations}

\textbf{3.} Stop if a convergence criterion is satisfied. Otherwise,
set $t\leftarrow t+1$ and go back to Step 2.
\end{algorithm}

The complexity of Algorithm 1\textbf{ }is given by the product of
the complexity of solving each convex problem (\ref{eq:problem-DC})
and the number of iterations. The complexity of solving a convex problem
is known to be polynomial in the problem size thanks to interior point
algorithms \cite[Ch. 1 and 11]{Boyd}, while the convergence is attained,
in our simulation, within a few tens of iterations. We note that the
analysis of the convergence rate of general CCCP algorithms is still
an open problem to the best of our knowledge.

\subsection{Baseline Schemes\label{sub:Baseline-Schemes}}

In this subsection, we discuss some baseline schemes.

\textit{1) Transmitter Selection:} One could avoid the problem of
the lack of synchronization by activating only the RRH that supports
the largest achievable minimum rate. The performance of this scheme
can be obtained by adopting Algorithm 1 with the additional linear
constraints $\mathbf{V}=\mathbf{0}$ or $\mathbf{\mathbf{\Sigma}}_{\mathbf{x}_{2}}=\mathbf{0}$,
and selecting the solution that yields the best performance.

\textit{2) Non-Cooperative Transmission:} A potentially better approach
that does not require synchronization is to let the RRHs send independent
signals. This approach, referred to as non-cooperative transmission
in Sec. \ref{sec:Numerical-Results}, includes the transmitter selection
scheme as a special case, and the optimization can be addressed by
Algorithm 1 with the additional constraints $\mathbf{\Omega}=\mathbf{0}$
(see also \cite{Xu-et-al} for the optimization).

\textit{3) Asynchronous Conventional Cooperation:} Finally, a conventional
approach would be to design the precoding strategy by assuming that
the time offset $d$ is zero. This design is obtained from Algorithm
1 by setting $\mathbf{\Omega}_{k,d}=\mathbf{0}$ for $k\in\mathcal{N}_{U}$
and $d\geq1$, and removing the constraints (\ref{eq:problem-original-rate-constraint})
with $d\geq1$.

We remark that the complexity increment of the proposed robust scheme
as compared to the baseline strategies depend on the maximum time
offset $D$, since the scheme requires the optimization of the cross-correlation
matrices $\{\mathbf{\Omega}_{k,d}\}_{k\in\mathcal{N}_{U},d\in\mathcal{D}}$,
which are instead not subject to optimization in the baseline strategies.

\section{Numerical Results\label{sec:Numerical-Results}}

\begin{figure}
\centering\includegraphics[width=9cm,height=7cm]{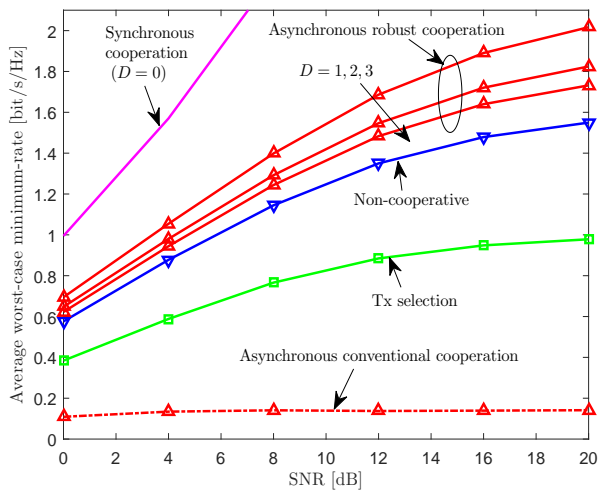}\caption{{\scriptsize{}\label{fig:graph-as-SNR}Average worst-case rate $R_{\min}$
versus the SNR $P$ ($n_{R,i}=n_{U,k}=1$ and $N_{U}=2$).}}
\end{figure}

In this section, we present some exemplifying numerical results to
gauge the potential advantages of the proposed robust scheme as compared
to the conventional solutions discussed in Sec. \ref{sub:Baseline-Schemes}.
To this end, we consider a symmetric system where the RRHs use the
same transmit power $P_{1}=P_{2}=P$ and the elements of the channel
matrices $\mathbf{H}_{k,j}$ are sampled in an independent and identically
distributed (i.i.d.) manner from a complex Gaussian distribution with
zero mean and unit variance. We used the cvx software \cite{Boyd:cvx}
to solve the convex problem (\ref{eq:problem-DC}) at each iteration
of Algorithm 1, and the maximum number of iterations was set to 50.
The worst-case minimum rate was averaged over 100 channel realizations.

Fig. \ref{fig:graph-as-SNR} shows the average worst-case rate versus
the signal-to-noise ratio (SNR) $P$ for the downlink of a C-RAN with
$n_{R,i}=n_{U,k}=1$ and $N_{U}=2$. We set here $\theta=0$, that
is, zero phase offset. For reference, we also show the performance
of the synchronous cooperation scheme for which the delay $d$ is
perfectly known. It is observed that the proposed robust scheme improves
over the non-cooperative and transmitter selection schemes with a
gain increasing with the SNR. This is in line with the intuition that
precoding design becomes more critical in the interference-limited
regime of high SNR. However, the performance gain of the proposed
robust scheme is reduced as the worst-case time offset $D$ increases.
We also emphasize that the conventional non-robust cooperation scheme
that neglects the uncertainty on the time offset performs even worse
than the transmitter selection scheme, highlighting the importance
of the robust design.

\begin{figure}
\centering\includegraphics[width=9cm,height=7cm]{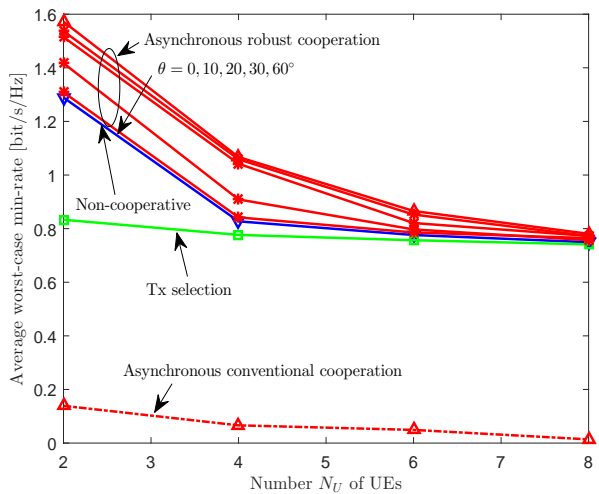}\caption{{\scriptsize{}\label{fig:graph-as-numUE}Average worst-case rate $R_{\min}$
versus the number $N_{U}$ of UEs ($n_{R,i}=N_{U}/2$, $n_{U,k}=1$
and $P=10$ dB).}}
\end{figure}

Fig. \ref{fig:graph-as-numUE} plots the average worst-case rate versus
the number $N_{U}$ of UEs for the downlink of a C-RAN with $n_{R,i}=N_{U}/2$,
$n_{U,k}=1$ and $P=10$ dB. Here we consider various values of the
phase offset $\theta$ as indicated in the figure. We emphasize that
$n_{R,i}=N_{U}/2$ is the smallest number of RRH antennas that are
able to serve the $N_{U}$ UEs without creating inter-UE interference
in the presence of perfect synchronization. Note that the achievable
rates can be easily computed in the presence of a phase offset by
using (\ref{eq:received-signal}) in (\ref{eq:rate-function}) as
done in the proof of Proposition \ref{prop:rate}. We can see that,
although the number $n_{R,i}$ of RRH antennas scales with the number
of UEs, the performance of all the schemes is interference-limited
as long as perfect synchronization is not available. Nevertheless,
for a sufficiently small number of UEs, robust cooperation yields
significant performance gains, even for phase offsets as high as $20\textdegree$.
However, for larger phase offsets, the observed performance degradation
calls for the development of a precoding design that is robust to
the phase offset.

\section{Concluding Remarks\label{sec:Concluding-Remarks}}

Fronthaul limitations constitute one of the key bottlenecks in the
implementation of C-RAN. In this work, we have considered the aspect
of imperfect RRH time synchronization, which may be caused by imperfect
clock distribution through the fronthaul network. The proposed robust
precoding design was demonstrated to have important advantages, particularly
in the high-SNR regime. Among the many interesting open issues, we
mention here the design of precoding strategies that are robust to
both time and phase offsets, as well as the analysis of models with
an arbitrary number of RRHs. 

\end{document}